\newtheorem{theorem}{Theorem}
\newtheorem{lemma}{Lemma}
\newtheorem{corollary}{Corollary}
\newcommand{\pn}{\mathop{\textnormal{pn}}}
\renewcommand{\subsection}[1]{\paragraph{\textbf{#1.}}}
\title{Low Ply Drawings of Trees and 2-Trees}
\author{Michael T. Goodrich\thanks{Dept. of Computer Science, University of California, Irvine, {\tt goodrich@uci.edu}}
        \and
        Timothy Johnson\thanks{Dept. of Computer Science, University of California, Irvine, {\tt  tujohnso@uci.edu}}}
\begin{document}
\maketitle

\begin{abstract}
Ply number is a recently developed graph drawing metric inspired by studying road networks. Informally, for each vertex $v$, which is associated with a point in the plane, a  disk is drawn centered on $v$ with a radius that is $\alpha$ times the length of the longest edge incident to $v$, for some constant $\alpha \in (0, 0.5]$. The \emph{ply number} is the maximum number of disks that overlap at a single point. We show that any tree with maximum degree $\Delta$ has a 1-ply drawing when $\alpha = O(1 / \Delta)$. We also show that when $\alpha = 1/2$, trees can be drawn with logarithmic ply number, with an area that is polynomial for bounded-degree trees. Lastly, we show that this logarithmic upper bound does not apply to 2-trees, by giving a lower bound of $\Omega(\sqrt{n / \log n})$ ply for any value of $\alpha$.
\end{abstract}

\section{Introduction}
A useful paradigm for drawing graphs involves visualizing them as maps or road networks, allowing a visualizer to ``zoom'' in and out of the graph based on known techniques that apply to maps. For example, Gansner {\it et al.}~\cite{gansner2010gmap} describe a GMap system for visualizing clusters in graphs as countries with nearby clusters drawn as neighboring countries. In addition, Nachmanson \textit{et al.}~\cite{mondal2017new,nachmanson2015graphmaps} describe a GraphMaps system for visualizing graphs as embedded road networks, so as to leverage the drawing and zooming capabilities of a roadmap viewer to explore the graph. Thus, a natural question arises as to which graphs are amenable to being drawn as road networks.

To answer this question, we formulate a precise definition of what we mean by a graph that could be drawn as a road network. One might at first suggest that graph planarity would be a good choice for such a formalism. But the class of planar graphs includes several graph instances that are difficult to visualize as road networks, such as the so-called ``nested triangles'' graph (e.g., see~\cite{JGAA-251,Frati2008,Garg1994}). In addition, as shown by Eppstein and Goodrich \cite{eppstein2008studying}, the class of planar graphs is not general enough to include all real-world road networks, as road networks are often not planar. For example, the California highway system alone has over 6,000 crossings. Instead of using planarity, then, Eppstein and Goodrich \cite{eppstein2008studying} introduce the concept of the \emph{ply number} of an embedded graph, and they demonstrate experimentally that real-world road networks tend to have small ply. Intuitively, the ply concept tries to capture how road networks have features that are well-separated at multiple scales. The formal definition of the ply number of a graph is derived from the definition of ply for a set of disks (which captures the depth of coverage for such a set of disks)~\cite{Miller-1137}; hence, the ply number of an embedded graph is defined in terms of the ply of a set of disks defined with respect to this embedding.

Let us therefore formally define the \emph{ply number} of an embedded geometric graph. Let $\Gamma$ be a straight-line drawing of a graph $G$. For every vertex $v \in G$, let $C_v^{\alpha}$ be the open disk centered at $v$ and whose radius $r_v^{\alpha}$ is $\alpha$ times the length of the longest edge incident to $v$. The set of ply disks containing a point $q$ is then $S_q^{\alpha} = \{C_v^{\alpha} \ | \  \| v - q \| < r_v^{\alpha} \}$. The \emph{$\alpha$-ply number} of this drawing is defined as
\[ \pn(\Gamma) = \max_{q \in \mathbb{R}^2} \| S_q^{\alpha} \| . \]
Usually, $\alpha$ is chosen in the range $(0, 0.5]$. In this range, a graph with two vertices and a single edge connecting them has ply number 1, because the ply disks for the two vertices will not overlap.

There are two natural optimization problems for the ply number when constructing a drawing of a given graph. One is to is to fix a constant ply number, and try to find a drawing that maximizes the value of $\alpha$. The other is to fix a value for $\alpha$, typically $1/2$, and try to find a drawing that minimizes the ply number. In this paper, we provide new results for both of these cases.

\subsection{Previous related work}
As an empirical justification of the use of ply numbers, De~Luca {\it et al.}'s experimental study~\cite{de2017experimental} found that some force-directed algorithms, including Kamada-Kawai \cite{kamada1989algorithm}, stress majorization \cite{gansner2004graph}, and the fast multipole method \cite{hachul2004drawing} all tend to produce drawings with low ply number. Their experiments also suggest that even trees with at most three children per node can have unbounded ply number when $\alpha = 1/2$.

The problem of drawing graphs with ply number equal to 1 is related to that of constructing circle-contact representations. A circle-contact representation for a graph is a collection of interior-disjoint circles, in which each circle represents a single vertex, and two vertices are adjacent if and only if their circles are tangent to one another \cite{hlinveny1995contact, hlinveny1998classes}. Di Giacomo {\it et al.}~\cite{di2015low} show that graphs with ply number 1 are equivalent to graphs with weak unit disk contact representations, which are known to be NP-hard to recognize~\cite{breu1998unit}. They also show that binary trees have drawings with ply number 2 when $\alpha = 1/2$, or with ply number 1 when $\alpha \leq 1/3$. Their drawing is reproduced in Figure~\ref{fig:alpha-ply-binary}.

\begin{figure} 
\centering
\includegraphics[scale=0.4]{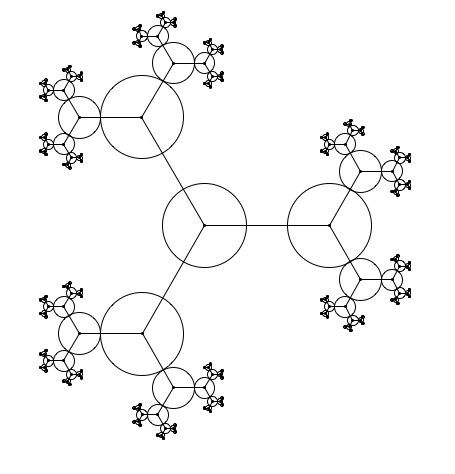}
\caption{Di Giacomo {\it et al.}'s drawing of a binary tree with $\alpha$-ply number 1, for $\alpha = \frac{1}{3}$. The edge lengths decrease by a factor of 2 at each level.}
\label{fig:alpha-ply-binary}
\end{figure}

Angelini {\it et al.}~\cite{angelini2017vertex} relax our definition of ply number to define the \textit{vertex-ply} of a drawing, which is the maximum number of intersecting disks at any vertex of the drawing. Graphs with vertex-ply number 1 can then be interpreted as a new variant of proximity drawings. 

In an earlier paper, Angelini {\it et al.}~\cite{angelini2016low} show that 10-ary trees have unbounded ply number. Furthermore, they prove that 5-ary trees can be drawn with logarithmic ply number and polynomial area. The ply number of drawings of trees with between three and nine children per node remains an interesting and surprisingly daunting open problem.

\subsection{Our results}
In this paper, we study a number of related problems concerning low-ply drawings of bounded-degree trees. We first answer an open question proposed by Di~Giacomo {\it et al.}~\cite{di2015low}, which asks whether all trees with maximum degree $\Delta$ have 1-ply drawings for a sufficiently small $\alpha$. We show in Section~2 that a simple fractal drawing pattern can achieve this when $\alpha = O(1 / \Delta)$.

In Section~3, we show that all trees (not just 5-ary trees) can be drawn with logarithmic ply number, for $\alpha = 1/2$. Furthermore, the area is polynomial for trees with bounded degree. These results depend on some careful arguments about geometric configurations and fractal-like geometric constructions, as well as yet another use of the heavy-path decomposition technique of Sleator and Tarjan~\cite{sleator1983data}.

It is then natural to consider whether any planar graph classes larger than trees can be drawn with logarithmic ply number. In Section~4, we show that this is not the case for $2$-trees, by constructing a family of $2$-trees that require a ply number of $\Omega(\sqrt{n / \log n})$ for any fixed $\alpha > 0$. Previous lower bounds have only applied for planar drawings, while non-planar drawings are known to sometimes have better ply number.

\section{1-ply Drawings}
In this section, we fix our drawings to have ply number 1, and provide conditions on $\alpha$ such that we can construct drawings of trees of any bounded degree. At a high level, our drawings are constructed as follows. For a tree with maximum degree $\Delta$, we divide the area around each parent vertex radially into $\Delta$ equal wedges, so that all of the angles are $2 \pi / \Delta$. Then we assign each subtree to a different wedge, and draw it within that wedge. The distance from each node to its children is chosen to be a constant fraction $f$ of its distance from its own parent. When $\Delta = 3$, this produces the drawing shown in Figure~\ref{fig:alpha-ply-binary}. Note that for a non-root vertex, one of the wedges will contain the edge from the parent vertex, and will not contain a subtree.

This produces a drawing that is highly symmetric, in a fashion that would produce a fractal if continued in the limit.\footnote{See Falconer~\cite{falconer2004fractal} for further reading about fractal geometry.} Thus, any bounded-degree tree is a subtree of this infinite tree; hence, this drawing algorithm can produce a drawing of any bounded-degree tree. Filling in the details of this construction requires setting the values of two parameters: $f$, the ratio between outgoing and incoming edge lengths; and $\alpha$, the ratio between the radius of a ply disk for a vertex and the length of its longest incident edge. We provide constraints for the following three cases, which taken together ensure that there are no overlaps, so that the ply number of our drawings is 1. We then maximize $\alpha$ such that all of these constraints are satisfied.  
\begin{enumerate} 
\item Ply disks for adjacent vertices must not overlap.  
\item Ply disks for vertices in separate subtrees must not overlap.  
\item A ply disk for a vertex must never overlap a ply disk for one of its descendants.
\end{enumerate}
It is easily verified that these three conditions are necessary and sufficient for a tree to have a 1-ply drawing.

\subsection{Condition 1: Separate adjacent vertices}
Except for the root vertex, which has no incoming edge, we proportion the lengths of the edges for each vertex as shown in Figure~\ref{fig:adjacent}.

\begin{figure}
\centering
\includegraphics[scale = 0.7]{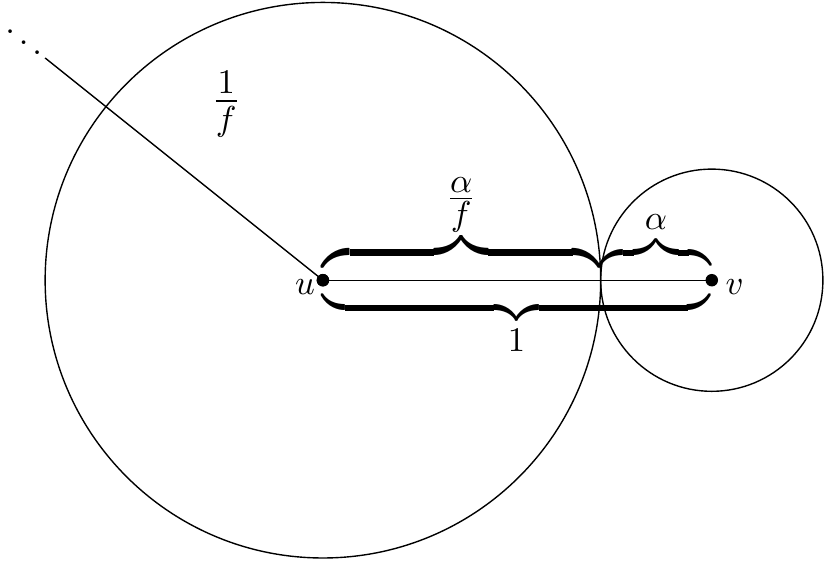}
\caption{Our edges decrease by a factor of $f$ at each level, and the
ply disks have radius $\alpha$ times the length of the incoming edge.}
\label{fig:adjacent}
\end{figure}

That is, taking the length of the reference edge $\overline{uv}$ as 1 (illustrated in Figure~\ref{fig:adjacent} going from parent to child in a left-to-right orientation), then, based on our definition of the $\alpha$-ply number, the radius of the larger circle is $\alpha/f$, the radius of the smaller circle is $\alpha$, and their distance is 1. Thus, we have our first condition relating $\alpha$ and $f$.
\begin{align*}
&\alpha / f + \alpha \leq 1 \\
&\alpha \leq \frac{f}{1 + f}
\end{align*}

\subsection{Condition 2: Separate subtrees with the same root}
We require that the ply disks for any subtree all be contained within a wedge of angle $\theta=\frac{2 \pi}{\Delta}$ around its parent vertex, where $\Delta$ is the degree. Since our wedges for each subtree are disjoint, this ensures that the ply disks for two adjacent subtrees cannot overlap.

As illustrated in Figure~\ref{fig:wedge}, the distance from a child vertex to the boundary of its containing wedge is $d = \sin \left( \frac{\pi}{ \Delta } \right)$. Note also that the lengths of edges along a path in this subtree form a geometric sequence with ratio $f$. So the maximum distance from a child vertex to any vertex in its subtree is $\sum_{i = 1}^{\infty} f^i = \frac{f}{1 - f}$.

\begin{figure}
\centering
\includegraphics[scale = 0.7]{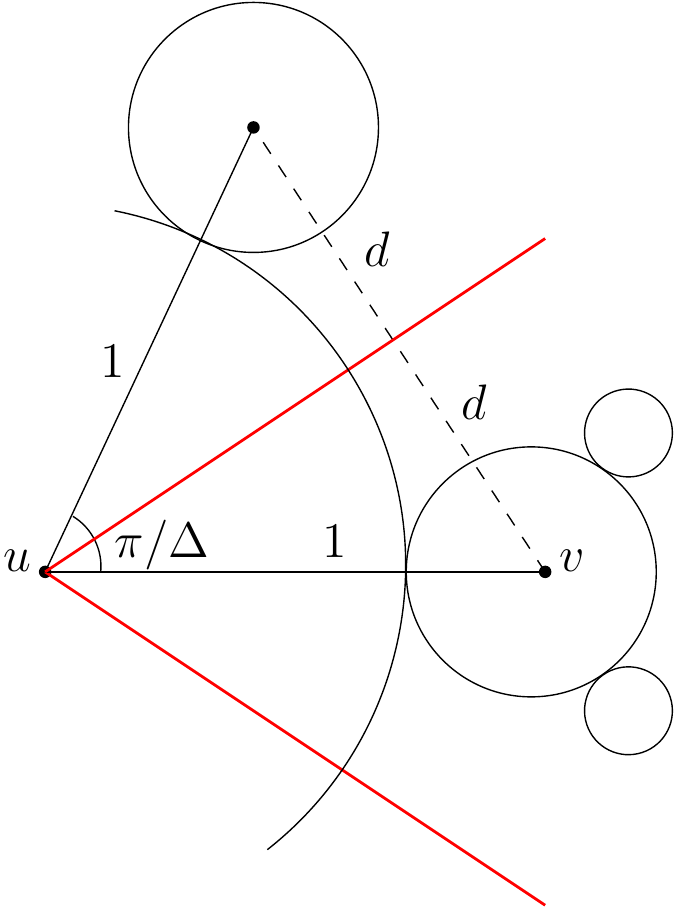}
\caption{Our constraint on a wedge containing a subtree of a central vertex.}
\label{fig:wedge}
\end{figure}

Therefore, to confine each subtree within its wedge, we must set
\[ \frac{f}{1 - f} \leq \sin \left( \frac{\pi} { \Delta } \right). \]
Solving for $f$, we get
\begin{align}
f \leq \frac{ \sin \left( \frac{\pi} {\Delta} \right) }{1 + \sin
\left( \frac{\pi} {\Delta} \right) }.
\end{align}

\subsection{Condition 3: Separate each vertex from its descendants}
Our last condition is that the ply disk for a vertex cannot overlap any of its descendants. The closest descendants will be those in the wedges on either side of the edge between their parent and grandparent, which are at an angle of $\frac{2 \pi}{\Delta}$ from their parent, as in Figure~\ref{fig:ancestor}.

\begin{figure}
\centering
\includegraphics[scale=0.9]{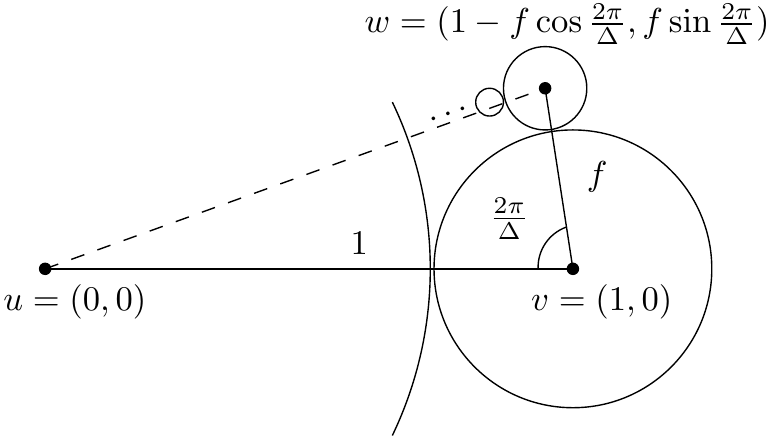}
\caption{Our layout leaves a gap of angle $\theta=\frac{2 \pi}{\Delta}$ for the edge from the parent vertex. The descendants on either side must not overlap their ancestors.}
\label{fig:ancestor}
\end{figure}

Once again we normalize $(u,v)$, as having length~1. We then perform a rigid transformation that
takes the grandparent, $u$, to the origin so that the edge $(u,v)$ is along the $x$-axis, $u$'s closest grandchild, which we call $w$, is located at  the point $\left( 1 - f \cos \left( \frac{2 \pi} {\Delta} \right), f \sin \left( \frac{2 \pi} {\Delta} \right) \right)$.  We require that the distance from $w$ to its descendants be no greater than the distance from $w$ to the boundary of the ply disk for $u$. Recall that our wedge angle $\theta = \frac{2 \pi} {\Delta}$.  We apply the following constraint:
\[\sqrt{(1 - f \cos \theta)^2 + (f \sin \theta)^2} \geq \frac{\alpha}{f} + \sum_{i = 2}^{\infty} f^i \]

After simplifying and solving for $\alpha$, our condition is
\[\alpha \leq f \sqrt{1 - 2f \cos \theta + f^2} - \frac{f^3}{1 - f}\]

Let us now compare our three conditions. We see that equation 2 gives us an upper bound for $f$, while equations 1 and 3 give us upper bounds for $\alpha$ that both increase as $f$ gets larger. So to maximize $\alpha$, we let $f$ be equal to its upper bound. This gives us the following theorem.

\begin{theorem}
Let $T$ be a tree with maximum degree $\Delta$, and let
\[f = \frac{ \sin \left( \frac{\pi} {\Delta} \right) }{1 + \sin
\left( \frac{\pi} {\Delta} \right)}.\]

$T$ has a 1-ply drawing if
\[\alpha \leq \min \left( \frac{f}{1 + f}, \ f \sqrt{1 - 2f \cos (2 \pi / \Delta) + f^2} - \frac{f^3}{1 - f} \right).\]
\end{theorem}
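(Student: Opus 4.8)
The plan is to assemble the three constraints derived in Sections~2.1--2.3 into a single sufficient condition, after first arguing that they are exhaustive. I would begin by making the recursive construction precise: place the root at the origin, and for each vertex $v$ reached by an edge of length $\ell$, partition the plane around $v$ into $\Delta$ congruent wedges of angle $2\pi/\Delta$, reserve the wedge containing the incoming edge, and assign each child subtree to a distinct remaining wedge, with the edge from $v$ to each child having length $f\ell$. Since $T$ has maximum degree $\Delta$, it embeds into the infinite self-similar tree realized by this rule, so the drawing is well defined, and the edge lengths along any root-to-leaf path form a geometric sequence with ratio $f$.

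Next I would justify that the ply number equals $1$ exactly when the three listed conditions hold. Two ply disks $C_x^\alpha$ and $C_y^\alpha$ can overlap only if $x$ and $y$ stand in one of three relations: they are adjacent; they lie in two sibling subtrees of a common vertex; or one is an ancestor of the other. These are precisely Conditions~1, 2, and~3, so it suffices to rule out overlaps in each case. For each relation I would reduce the infinitely many candidate pairs to a single closest configuration and invoke the computation already carried out: for adjacency, the parent--child pair of Figure~\ref{fig:adjacent} gives $\alpha \le f/(1+f)$; for sibling subtrees, confining each subtree to its wedge makes the wedges disjoint, and since the farthest a subtree reaches from its root is the geometric tail $\sum_{i\ge1} f^i = f/(1-f)$, the containment requirement $f/(1-f)\le \sin(\pi/\Delta)$ yields the displayed upper bound on $f$; and for the ancestor--descendant case, the nearest descendant of a vertex $u$ is the grandchild $w$ of Figure~\ref{fig:ancestor}, whose own descendants are again controlled by a geometric tail, producing the Condition~3 bound on $\alpha$.

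Finally I would combine the three constraints. Condition~2 caps $f$, while Conditions~1 and~3 impose upper bounds on $\alpha$ that are both increasing in $f$; hence setting $f$ to the largest value permitted by Condition~2 maximizes the admissible range of $\alpha$, and any $\alpha$ below the minimum of the two resulting expressions satisfies all three conditions simultaneously, giving a 1-ply drawing.

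The main obstacle I anticipate is the reduction step of the second paragraph: verifying rigorously that in the self-similar structure the ``closest'' configuration genuinely dominates, so that no more distant pair of vertices in sibling subtrees or along an ancestor chain can have overlapping disks, and that the telescoping geometric tails correctly bound the cumulative displacement rather than undercounting it. Confirming that the Condition~3 expression is indeed monotonically increasing in $f$ over the relevant range, which licenses pushing $f$ to its maximum, is a routine but necessary calculation.
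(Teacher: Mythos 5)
Your proposal follows the paper's argument essentially verbatim: the same self-similar wedge construction with edge-length ratio $f$, the same three exhaustive overlap cases yielding the constraints $\alpha \le f/(1+f)$, $f/(1-f) \le \sin(\pi/\Delta)$, and the Condition~3 bound, and the same final step of pushing $f$ to its Condition~2 maximum because the two $\alpha$-bounds increase in $f$. The ``main obstacle'' you flag (verifying that the closest configurations dominate) is precisely what the paper also leaves as an asserted verification, so your account matches the paper's proof in both substance and level of rigor.
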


\begin{corollary}
A tree with maximum degree $\Delta$ has a 1 ply drawing when $\alpha~=~\Theta(1 / \Delta)$.
\end{corollary}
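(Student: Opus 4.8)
The plan is to read the corollary as an asymptotic evaluation of the bound in Theorem~1. Since that bound is an explicit expression in $\Delta$, the task reduces to showing that, with $f = \frac{\sin(\pi/\Delta)}{1+\sin(\pi/\Delta)}$, the quantity $\min\left(\frac{f}{1+f},\ f\sqrt{1-2f\cos(2\pi/\Delta)+f^2}-\frac{f^3}{1-f}\right)$ is both $O(1/\Delta)$ and $\Omega(1/\Delta)$. I would first record the estimate $f = \Theta(1/\Delta)$: using the bounds $\frac{2}{\pi}x \le \sin x \le x$ for $x \in [0,\pi/2]$ gives $\sin(\pi/\Delta) = \Theta(1/\Delta)$, and since this quantity tends to $0$, the denominator $1+\sin(\pi/\Delta)$ stays between $1$ and $2$, so $f = \Theta(1/\Delta)$ and $f \to 0$ as $\Delta \to \infty$.

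The first term is then immediate: because $0 < f < 1$ we have $\frac{f}{2} \le \frac{f}{1+f} \le f$, so $\frac{f}{1+f} = \Theta(f) = \Theta(1/\Delta)$. The content of the corollary is therefore concentrated in the second term, and I expect the square root to be the only delicate part. For the upper bound I would use $\cos(2\pi/\Delta) \ge -1$ to obtain $1 - 2f\cos(2\pi/\Delta) + f^2 \le (1+f)^2$, hence $f\sqrt{\cdots} \le f(1+f) = O(f)$; this already shows the whole minimum is $O(1/\Delta)$.

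For the lower bound I would exploit the complementary inequality $\cos(2\pi/\Delta) \le 1$, which gives $1 - 2f\cos(2\pi/\Delta) + f^2 \ge (1-f)^2$ and thus $\sqrt{\cdots} \ge 1-f$. Substituting this and simplifying is where a small algebraic identity does the work:
\[
f(1-f) - \frac{f^3}{1-f} \;=\; f\cdot\frac{(1-f)^2 - f^2}{1-f} \;=\; f\cdot\frac{1-2f}{1-f}.
\]
Since $f \to 0$, the factor $\frac{1-2f}{1-f}$ tends to $1$ and in particular exceeds a fixed positive constant for all sufficiently large $\Delta$, so the second term is $\Omega(f) = \Omega(1/\Delta)$. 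Combining the bounds on each argument of the minimum yields $\min(\cdots) = \Theta(1/\Delta)$, which is exactly the claim. The main obstacle here is bookkeeping rather than difficulty: one must resist the urge to Taylor-expand everything and instead keep clean one-sided bounds on $\cos(2\pi/\Delta)$ (namely $-1 \le \cos \le 1$) so that the square root collapses to $1\pm f$, after which the subtracted term $\frac{f^3}{1-f}$ is absorbed \emph{exactly} by the identity above rather than merely discarded as lower order.
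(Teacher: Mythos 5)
Your proposal is correct and follows essentially the same route as the paper: establish $f = \Theta(1/\Delta)$ from the definition, then show each of the two arguments of the minimum is $\Theta(f)$. The only difference is cosmetic --- the paper verifies $\lim_{f\to 0}\alpha/f = 1$ for each term, whereas you make the same conclusion explicit with the one-sided bounds $-1 \le \cos(2\pi/\Delta) \le 1$ and the identity $f(1-f) - \frac{f^3}{1-f} = \frac{f(1-2f)}{1-f}$, which is a slightly more rigorous rendering of the same argument.
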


\begin{proof}
First, recall that we defined:
\[f = \frac{ \sin \left( \frac{\pi} {\Delta} \right) }{1 + \sin \left( \frac{\pi} {\Delta} \right)}.\]

Now we will consider the limiting value of $\Delta \cdot f$.
\[\lim_{\Delta \rightarrow \infty} \Delta \cdot f = \lim_{\Delta \rightarrow \infty} \frac{\Delta \sin (\pi / \Delta)}{1 + \sin (\pi / \Delta)} = \pi\]

Therefore, $f = \Theta (1 / \Delta)$. So as $\Delta \rightarrow \infty$, $f \rightarrow 0$. \newline

Secondly, recall that in our theorem we showed:
\[\alpha \leq \min \left( \frac{f}{1 + f}, \ f \sqrt{1 - 2f \cos (2 \pi / \Delta) + f^2} - \frac{f^3}{1 - f} \right)\]

Suppose that we use the first condition, $\alpha = f / (1 + f)$. Then $\alpha / f = 1 / (1 + f)$. So $\lim_{f \rightarrow 0} \alpha / f = 1$. \newline

Then suppose that we use the second condition:
\[\alpha = f \sqrt{1 - 2f \cos (2 \pi / \Delta) + f^2} - \frac{f^3}{1 - f}\]

Again, $\lim_{f \rightarrow 0} \alpha / f = 1$, so $\alpha = \Theta(f) = \Theta(1 / \Delta).$
\end{proof}

Note, however, that some of our conditions are not tight. For condition 2, we assumed that the branches of our subtrees would approach the sides of their wedge directly. But when the degree of our tree is 4, the angle between two subtrees is $90^{\circ}$. Therefore, every edge in our tree is either horizontal or vertical, so we can measure the distance to the boundary of the wedge using Manhattan distance instead of Euclidean distance. (See Figure~\ref{fig:wedge-manhattan}.)

So for a tree with degree 4, we replace condition 2 with the following equation:
\[\sum \limits_{i = 1}^{\infty} f^i \leq 1\]
This implies $f = 1/2$, and our other conditions imply $\alpha = 1/3$. In this case, our bound is tight. (See Figure~\ref{one-ply-example}.)

\begin{figure}
\centering
\includegraphics[scale=1]{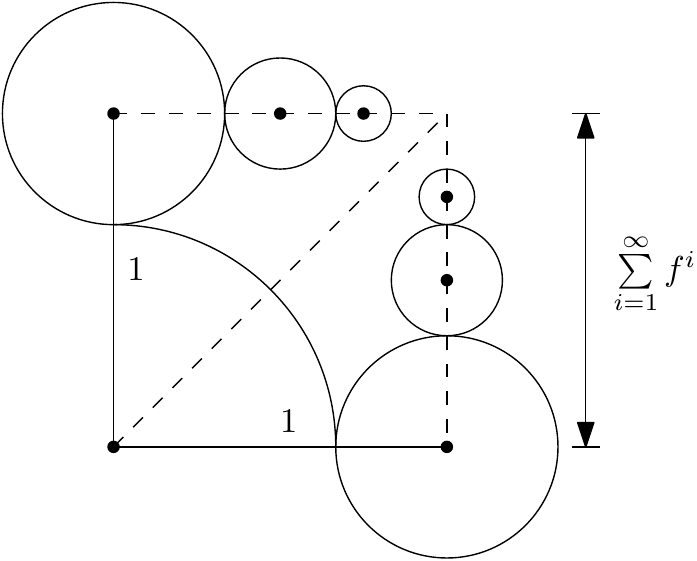}
\caption{An improved bound for Condition 2. The Manhattan distance is sufficient to confine subtrees within a wedge when all edges are either horizontal or vertical.}
\label{fig:wedge-manhattan}
\end{figure}

\begin{figure}
\centering
\includegraphics[scale=0.5]{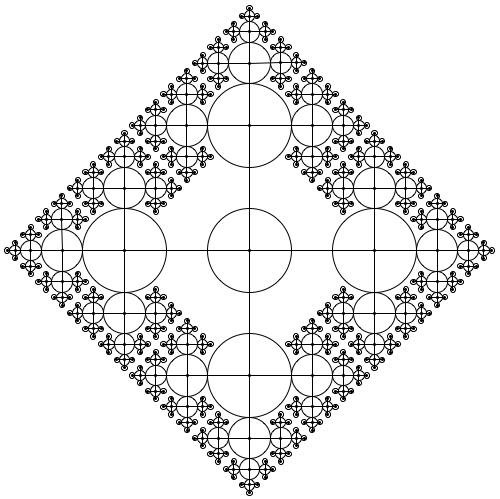}
\caption{A 1-ply drawing of a tree with maximum degree four, for which $f = 1/2$, $\alpha = 1/3$.}
\label{one-ply-example}
\end{figure}

\section{Polynomial area, logarithmic ply number}
In this section, we prove the following theorem.

\begin{theorem} \label{thm:log-drawing}
For $\alpha = 0.5$, a tree with maximum degree $\Delta$ can be drawn with ply number $O(\log n)$ in area $n^{O(\Delta)}$.
\end{theorem}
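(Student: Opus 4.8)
The plan is to combine a fractal-like recursive layout with the heavy-path decomposition of Sleator and Tarjan. First I would decompose $T$ into heavy paths, where at each vertex the edge to the child with the largest subtree is heavy and all other edges are light. The structural fact I will exploit is that every root-to-leaf path crosses only $O(\log n)$ light edges, since the subtree size at least halves whenever a light edge is traversed; equivalently, the hierarchy whose nodes are heavy paths (a heavy path being the parent of the heavy paths rooted at its light children) has depth $O(\log n)$. The drawing will be built recursively over this hierarchy: I draw the topmost heavy path in a canonical way, and for each light child I recursively draw its subtree, uniformly scale it down, and place it in a disjoint angular wedge around its attachment vertex, much as in Section~2. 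Because ply is invariant under uniform scaling, this recursive draw-then-shrink-then-place step is well defined.

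For a single heavy path I would use edges of equal length laid out along a line (or, to keep the bounding box roughly square, along a serpentine curve), so that the ratio of longest to shortest edge contributed by the path itself is $O(1)$. Crucially, along a heavy path no scale change occurs, which is exactly what heavy paths buy us, since a long path drawn with geometrically shrinking edges would force an exponentially small feature. At each heavy-path vertex $v$ I allocate up to $\Delta-1$ wedges of angle $2\pi/\Delta$ for the light subtrees and shrink each light subtree by a factor $2^{-\Theta(\Delta)}$ relative to the current scale. This reduction is chosen large enough that (i) $v$'s longest incident edge remains its heavy-path edge, so with $\alpha=1/2$ the disk $C_v^{1/2}$ has radius comparable to the local path edge, and (ii) the light subtrees, whose total size is at most that of $v$'s subtree, pack disjointly into the wedges along the extent of the path. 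The precise angular and radial constraints here are the analogues of Conditions~1--3, except that I now only need each heavy path to contribute $O(1)$ overlapping disks rather than none.

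The ply bound then follows by charging overlaps to the heavy-path hierarchy. Consider any point $q$. Within its own heavy path only $O(1)$ disks cover $q$, because the path edges are comparable and $\alpha=1/2$ makes consecutive disks essentially tangent. For every ancestor heavy path, $q$ lies inside a single small wedge hanging off one attachment vertex $u$, so only $C_u^{1/2}$ and $O(1)$ of its neighbors on that ancestor path can reach down to $q$; every other vertex of that path is too far, since its disk has radius comparable to that path's (larger) edge but it lies at least one such edge away. Summing $O(1)$ per level over the $O(\log n)$ ancestor heavy paths gives ply $O(\log n)$.

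Finally, for the area I would track the scale spread. The top heavy path has extent $O(n)$, while each traversal of a light edge reduces the scale by a factor $2^{\Theta(\Delta)}$; since any path crosses $O(\log n)$ light edges, the smallest edge length is at least $n^{-O(\Delta)}$ times the largest, so placing the drawing on an integer grid uses side length $n^{O(\Delta)}$ and hence area $n^{O(\Delta)}$. I expect the main obstacle to be the simultaneous control of ply and area: ply forces a genuinely multi-scale, fractal construction (equal-scale drawings are known to have unbounded ply at $\alpha=1/2$), whereas area is harmed by every scale change, and reconciling the two hinges on showing that the per-level shrink can be kept to $2^{O(\Delta)}$, independent of the subtree size, while still separating all $\Delta$ wedges and keeping each attachment vertex's heavy edge longest. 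Verifying this geometric configuration (the counterpart of the Section~2 case analysis, now tolerating $O(1)$ overlaps) and confirming that the light subtrees pack within the parent path's extent is the delicate step; the heavy-path decomposition is precisely what bounds the number of such shrink events, and thus both the ply and the multiplier in the area exponent, by $O(\log n)$.
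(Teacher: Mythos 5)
Your high-level skeleton matches the paper's: heavy-path decomposition, $O(1)$ ply contributed per level of the path hierarchy, $O(\log n)$ levels, and an area bound obtained by tracking the per-level scale reduction. But there is a genuine gap in the step you yourself flag as delicate, and it is not a mere verification detail. You propose to draw each heavy path with \emph{equal-length} edges and to shrink each light subtree by a factor $2^{-\Theta(\Delta)}$ \emph{independent of its size}. These two choices are incompatible. A light subtree anchored at a vertex $v_i$ of a heavy path can contain up to half of all vertices of the tree rooted at the path's anchor; its recursive drawing (itself a heavy path with many edges, plus its own anchored subtrees) has extent proportional to its size times its scale, so after a size-independent constant shrink it cannot fit inside a wedge of radius comparable to one unit-length path edge. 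If you instead let the shrink factor depend on the subtree size (shrinking by $\Theta(1/n_i)$ to force the fit), the scale reductions along a chain of the path hierarchy multiply to $\prod_j \Theta(1/n_{i_j})$, which is superpolynomial in general, destroying the $n^{O(\Delta)}$ area bound.

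The missing idea is the paper's \textsc{DrawPath} construction: heavy-path edges are \emph{not} uniform but are initialized to $l(v_i,v_{i+1}) = n_i + n_{i+1}$, where $n_i$ is the total size of the subtrees anchored at $v_i$, and are then smoothed so that adjacent edges differ by at most a factor of~$2$ (a ``2-drawing,'' which still has ply at most~2 even though the edges vary). This gives each vertex $v_i$ a drawing disk of radius $n_i$ in which its anchored subtrees can be placed in $\Delta-1$ concentric layers separated by factors of~3, while keeping the total path length $O(n)$ and the relative scale change between a path and its anchored paths a true constant $3^{O(\Delta)}$. That is precisely what makes the ply bound ($3$ per level, hence $3(h+1)=O(\log n)$ total) and the area bound ($3^{O(\Delta \log n)}\cdot n = n^{O(\Delta)}$) hold simultaneously. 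Without replacing your equal-length paths by something with this subtree-size-proportional property, your argument does not go through.
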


Note that for a bounded-degree tree, $\Delta$ is a constant, so our area is polynomial in $n$. We first give a simple fractal layering algorithm that proves our theorem for balanced trees. Then we extend it to all trees by using a heavy path decomposition. A similar approach was used by Angelini \textit{et. al.}~\cite{angelini2016low} for drawing trees up to maximum degree six, but we add our layering technique to make their algorithm work for all trees.

\subsection{Radially layered drawings}
We begin with a simple algorithm for drawing trees by layering their children. For each vertex, we choose a sequence of distances $d_i$ for the layers, such that vertices in adjacent layers have disjoint ply disks. 

\begin{lemma} 
Suppose that $r$ is the root of a star graph. Let $v_1, v_2$ be children at distances $d_1, d_2$, respectively. If $d_2 \geq 3d_1$, then the ply disks for $v_1$ and $v_2$ are disjoint.
\label{lemma:layer-size}
\end{lemma}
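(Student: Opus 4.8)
The plan is to reduce the claim to a one-line inequality relating the two radii to the minimum possible gap between the children. First I would record the radii. Since $r$ is the root of a \emph{star}, both $v_1$ and $v_2$ are leaves, so the unique edge incident to each is the one joining it to $r$; hence the longest (indeed only) incident edge of $v_i$ has length $d_i$. With $\alpha = 1/2$ this makes the ply-disk radii $r_{v_1}^{\alpha} = d_1/2$ and $r_{v_2}^{\alpha} = d_2/2$. Two open disks fail to overlap precisely when the distance between their centers is at least the sum of their radii, so it suffices to prove $\| v_1 - v_2 \| \ge (d_1 + d_2)/2$.

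Next I would lower-bound the center-to-center distance uniformly over all angular placements of the children. Both $v_1$ and $v_2$ lie on circles about $r$ of radii $d_1$ and $d_2$, so by the triangle inequality $\| v_1 - v_2 \| \ge \| r - v_2 \| - \| r - v_1 \| = d_2 - d_1$, with equality exactly when $v_1$ lies on the segment $\overline{r v_2}$. This collinear configuration is the worst case, and it is what lets the lemma be stated without any reference to the angle between the two edges.

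Finally I would combine the two steps: it is enough to verify $d_2 - d_1 \ge (d_1 + d_2)/2$, which rearranges to $d_2 \ge 3 d_1$ --- exactly the hypothesis. I do not expect any genuine obstacle here; the only point worth flagging is that the constant $3$ is tight, arising precisely from equating the collinear gap $d_2 - d_1$ with the summed radius $(d_1 + d_2)/2$, so the same computation both proves the lemma and shows the factor cannot be lowered for this configuration. This is presumably why the subsequent layered construction spaces consecutive layers by a factor of (at least) $3$.
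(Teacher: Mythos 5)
Your proof is correct and is essentially the paper's argument in a slightly different packaging: the paper encloses $v_1$'s ply disk in the disk of radius $1.5\,d_1$ about $r$ and notes $v_2$'s ply disk stays outside radius $0.5\,d_2 \geq 1.5\,d_1$, which is exactly your inequality $d_2 - d_1 \geq (d_1+d_2)/2$ rearranged. Your added remark that the constant $3$ is tight is a nice observation but not needed.
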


\begin{proof}
The distance to $v_1$ is $d_1$, so since $\alpha = 0.5$, its ply disk will have radius $0.5 d_1$, and will be contained within an open disk of radius $1.5 d_1$ centered at $r$. The distance to $v_2$ is $d_2$, so its ply disk will have radius $0.5 d_2$. Its closest approach to $r$ will be at distance $0.5 d_2 \geq 1.5 d_1$. Thus, the ply disks for $v_1$ and $v_2$ are disjoint. (See Figure \ref{layer-size}.)
\end{proof}

\begin{figure}
\centering
\includegraphics[scale=0.7]{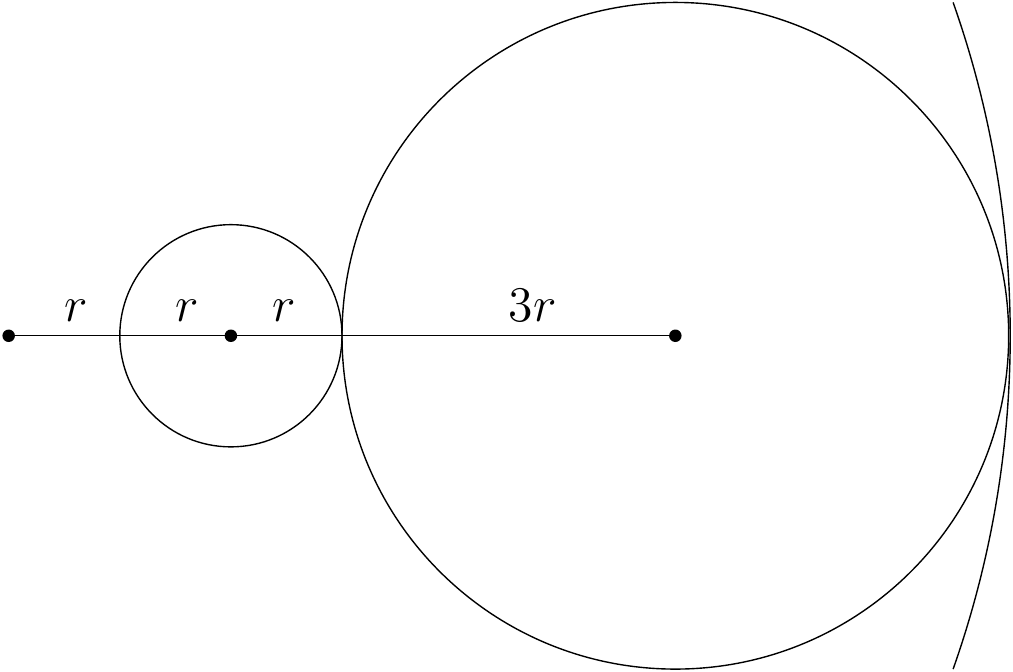}
\caption{If each layer in a tree drawing is at least three times as far as the previous layer, the ply disks for the layers will not overlap. In this figure, $d_1 = 2r$ and $d_2 = 6r$, so our condition holds.}
\label{layer-size}
\end{figure}

Next, note that we can put up to six vertices in each layer without overlaps. So for a tree with degree $\Delta$, we need $\lceil \Delta/6 \rceil$ layers. We pick any desired size for the initial layer around our root, then draw the subtrees for each child vertex recursively within their own ply disks. Therefore, the size of the smallest layer must shrink by a factor of $3^{\lceil \Delta/6 \rceil}$ each time we add a level to our tree.

Since our tree is balanced, its total height is $O(\log n)$. Thus the ratio of the longest to the smallest edge is $3^{O(\Delta \log n)} = n^{O(\Delta)}$. The area will then also be $n^{O(\Delta)}$, for a larger constant.

This completes our proof for balanced trees. Figure \ref{fig:layered-tree} provides an example drawing of such a tree with degree 18 using three layers.

\begin{figure} 
\centering
\includegraphics[scale=0.4]{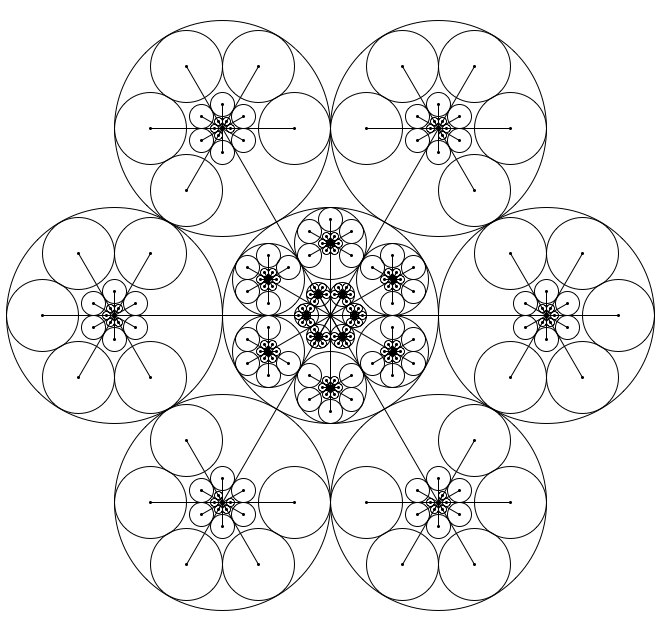}
\caption{A tree with degree 18, where the children of each vertex are drawn in three layers.}
\label{fig:layered-tree}
\end{figure}

\subsection{Heavy path decomposition}
When our trees are not balanced, we will use the heavy path decomposition~\cite{sleator1983data} to still produce drawings with logarithmic ply number. This decomposition partitions the vertices in our tree into paths that each end at a leaf. To choose the first path, we begin at the root. Then from its child subtrees, we choose the largest one and add its root to our path. We continue downward until we reach a leaf.

We next remove the vertices on this path from our tree, creating a new set of subtrees, and repeat the same process for each subtree. That is, the root vertex for each of these subtrees will become the starting point for a new path constructed by the same process. We recurse until every vertex in our tree is assigned to some path. The subtrees that are rooted at a child of a vertex $v$ and whose root is not on the same path as $v$ are are said to be \textit{anchored} at $v$. The path containing the root of each of those subtrees is also said to be anchored at $v$.

The set of paths constructed by this process now itself forms a new tree (see Figure \ref{fig:heavy-path}), in which the path $P_i$ is a parent of $P_j$ if one of the vertices in $P_i$ is an anchor for $P_j$. We will show that the ply number of our drawings is proportional to the height of this decomposition tree, which is known to be $O(\log n)$.

Now we describe how to draw each path in the decomposition tree. First, we define a \textit{2-drawing} of a path $P = (v_1, \ldots, v_m)$ as a straight-line drawing of $P$ along a single segment that satisfies the following properties.

\begin{itemize}
\item All of the vertices appear in the line segment in the same order as they appear in $P$.
\item For each $i = 2, \ldots, m - 1$ we have  $\frac{l(v_{i - 1}, v_i)}{2} \leq l(v_i, v_{i + 1}) \leq 2 l(v_{i - 1}, v_i)$.
\end{itemize}

\begin{lemma}
A 2-drawing of a path has ply number at most 2.
\end{lemma}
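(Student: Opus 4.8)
The plan is to reduce the statement to a one-dimensional problem and then show that two ply disks can overlap only when their centers are consecutive vertices of the path. Since the path is drawn along a single segment, I would place a coordinate $x_i$ on each vertex $v_i$ (taken in increasing order) and write $e_i = l(v_i, v_{i+1})$ for the edge lengths. Because $\alpha = 1/2$, the ply disk of an interior vertex $v_i$ is the open disk of radius $r_i = \tfrac12 \max(e_{i-1}, e_i)$ centered at $x_i$, with the endpoint radii $r_1 = \tfrac12 e_1$ and $r_m = \tfrac12 e_{m-1}$. For any point $q$ in the plane, its orthogonal projection $q'$ onto the line through the segment is at least as close to every center as $q$ is, so every disk containing $q$ also contains $q'$; hence the maximum ply is attained on the line itself, and it suffices to bound how many of the open intervals $I_i = (x_i - r_i,\, x_i + r_i)$ can share a common point.

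The key step is the radius bound $r_i \le \min(e_{i-1}, e_i)$ for every interior vertex, which follows because the $2$-drawing condition $\tfrac{e_{i-1}}{2} \le e_i \le 2 e_{i-1}$ forces the two incident edge lengths to differ by a factor of at most $2$, so $\max(e_{i-1}, e_i) \le 2 \min(e_{i-1}, e_i)$; the endpoint radii satisfy the analogous bound trivially. Using this, I would show that any two intervals $I_i$ and $I_j$ with $j \ge i+2$ are disjoint. For such a pair the separation of the centers is $x_j - x_i = \sum_{t=i}^{j-1} e_t \ge e_i + e_{j-1}$, since $i$ and $j-1$ are distinct indices and all edge lengths are positive, whereas $r_i + r_j \le e_i + e_{j-1}$ by applying the radius bound (to the forward edge of the left center and the backward edge of the right center). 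Thus $r_i + r_j \le x_j - x_i$, so the two open intervals cannot meet.

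To conclude, suppose three intervals $I_a, I_b, I_c$ with $a < b < c$ all contained a common point $q$. Then in particular $I_a$ and $I_c$ would overlap, contradicting the previous step since $c - a \ge 2$. Hence no point lies in more than two intervals, and, together with the projection argument, the planar ply of the drawing is at most $2$.

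The only place requiring care is the bookkeeping for the endpoint vertices $v_1$ and $v_m$, whose radii come from a single incident edge: I must make sure the radius bounds $r_i \le e_i$ and $r_j \le e_{j-1}$ are each invoked on a vertex that truly has the relevant incident edge. Since the left interval $I_i$ of a gap-$\ge 2$ pair has $i \le m-2$ and the right interval $I_j$ has $j \ge 3$, the edges $e_i$ and $e_{j-1}$ always exist, so no boundary case is lost. Everything else is the routine collinear-disk computation sketched above, and I do not expect a genuine obstacle beyond this indexing.
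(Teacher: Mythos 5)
Your proof is correct. Note that the paper itself gives no argument here: its entire ``proof'' is a citation to Lemma~5 of Angelini \emph{et al.}, so there is nothing in the text to compare against. Your self-contained argument is sound: the orthogonal-projection step legitimately reduces the problem to the line (projection onto the supporting line does not increase distance to any center, so the set of disks covering $q$ is contained in the set covering $q'$); the $2$-drawing condition $\tfrac{e_{i-1}}{2} \le e_i \le 2e_{i-1}$ does give $r_i = \tfrac12\max(e_{i-1},e_i) \le \min(e_{i-1},e_i)$ at interior vertices, with the endpoint cases trivial; and for $j \ge i+2$ the separation $x_j - x_i = \sum_{t=i}^{j-1} e_t \ge e_i + e_{j-1} \ge r_i + r_j$ (the two edges indexed are distinct and both exist) shows the open intervals are disjoint, so no point lies in three intervals. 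This is a clean direct proof of exactly the statement the paper outsources.
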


\begin{proof}
See Lemma 5 in Angelini {\it et al.}~\cite{angelini2016low}.
\end{proof}

\begin{figure}
\centering
\includegraphics[scale=0.7]{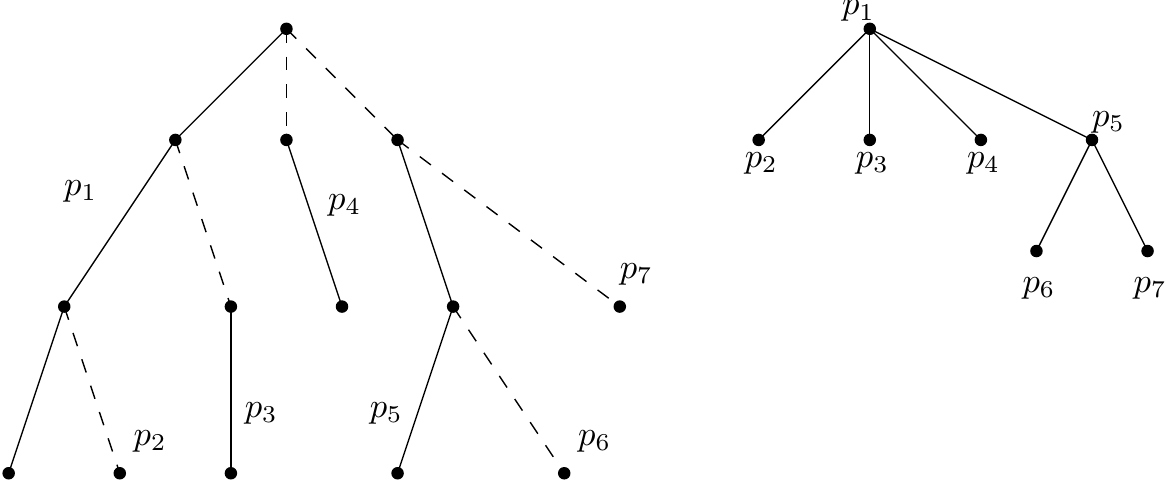}
\caption{A tree and its heavy path decomposition.}
\label{fig:heavy-path}
\end{figure}

\begin{figure}
\centering
\includegraphics[scale=0.65]{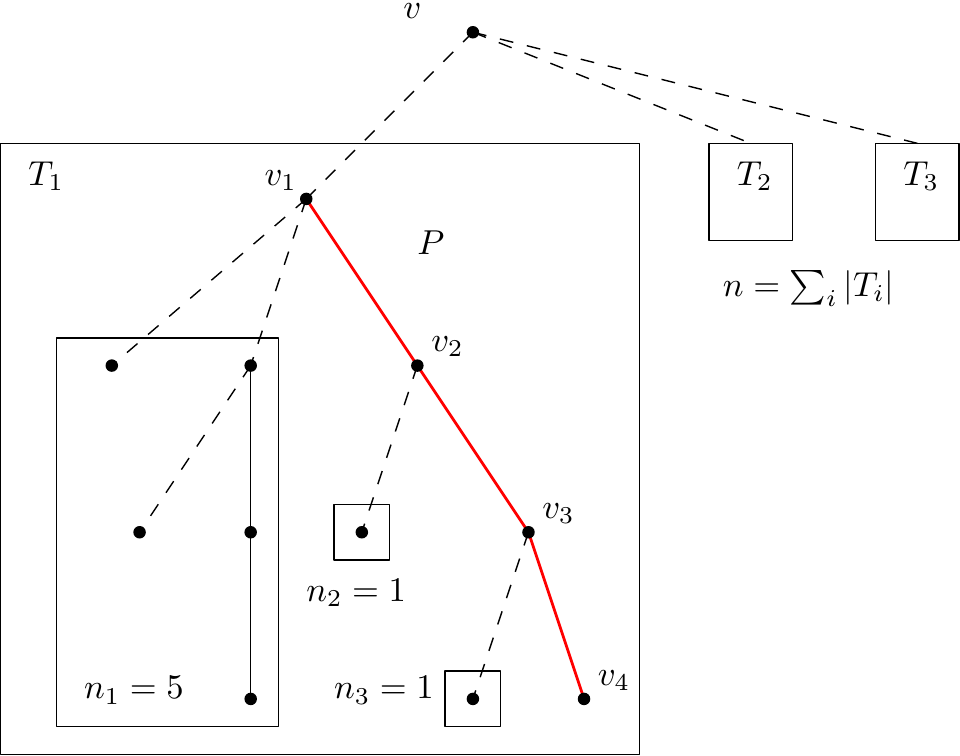}
\caption{Labels for different sizes in a heavy path decomposition tree.}
\label{fig:heavy-path-labels}
\end{figure}

Now suppose that we have a path $P = (v_1, v_2, \ldots, v_k)$ in our heavy path decomposition, and let $P$ be anchored at vertex $v$, so that $v$ is the parent of $v_1$. Let $n$ be the total size of the subtrees anchored at $v$, and let $n_i$ be the total size of the subtrees anchored at $v_i$ (Figure~\ref{fig:heavy-path-labels}). Lastly, we denote the length of the edge $(v, w)$ as $l(v, w)$.

Intuitively, we want to draw each path so that more space is available for vertices that have larger subtrees. At the same time, we want to ensure that the lengths of the two edges for a vertex are within a factor of two, so that our path is a 2-drawing. This can be achieved using the following algorithm \textsc{DrawPath}.

To draw the path $P$, we first set $l(v, v_1) = n_1$ and $l(v_i, v_{i + 1}) = n_i + n_{i + 1}$, for each $i = 1, \ldots, k - 1$. Next we visit the edges of our path in decreasing order of length. When an edge $(v_i, v_{i + 1})$ is visited, we make sure that both of its neighboring edges are at least half as long. That is, we set:
\begin{itemize}
\item $l(v_{i - 1}, v_i) = \max \{ \frac{l(v_i, v_{i + 1}}{2}, l(v_{i - 1}, v_i) \}$
\item $l(v_{i + 1}, v_{i + 2}) = \max \{ \frac{l(v_i, v_{i + 1}}{2}, l(v_{i + 1}, v_{i + 2})\}$
\end{itemize}

\begin{lemma} \label{lem:two-drawing-length}
The algorithm \textsc{DrawPath} constructs a 2-drawing $\Gamma$ of $P$ such that $l(v, v_1) \geq n_1$, $l(v_i, v_{i + 1}) \geq n_i + n_{i + 1}$, and for each $i = 1, \ldots, m - 1$, and $l(P) \leq 6 n$.
\end{lemma}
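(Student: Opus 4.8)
The plan is to establish the three claims in order, using throughout the fact that \textsc{DrawPath} only ever increases edge lengths. The initialization sets $l(v,v_1)=n_1$ and $l(v_i,v_{i+1})=n_i+n_{i+1}$, and every later update replaces a length by a maximum of its current value with something else, so no length ever decreases. Hence each final length dominates its initial value, and the two lower bounds $l(v,v_1)\ge n_1$ and $l(v_i,v_{i+1})\ge n_i+n_{i+1}$ hold without further argument.

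The heart of the lemma is the 2-drawing property, that adjacent edges finish within a factor of two of one another. Index the $k$ edges of $P$ by the vertex they enter, and let $a_1,\ldots,a_k$ denote their initial lengths. I would show that the length produced for edge $i$ is $L_i=\max_h a_h/2^{|i-h|}$, the least assignment that dominates the $a_h$ and satisfies $L_i\ge L_{i'}/2$ for adjacent edges. This matches \textsc{DrawPath} because the algorithm visits edges from longest to shortest and, on visiting an edge of current length $\ell$, raises each neighbor to at least $\ell/2$; thus the influence of a long edge spreads outward, halving at each step, exactly as the formula records. Given this identity, the 2-drawing inequalities follow at once: for adjacent indices $i$ and $i+1$ and any $h$, the exponents $|i-h|$ and $|i+1-h|$ differ by exactly one, so the two contributions differ by at most a factor of two, and passing to the maxima gives $L_i/2\le L_{i+1}\le 2L_i$.

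For the bound $l(P)\le 6n$ I would separate the initial total from the increase. The initial total is $n_1+\sum_{i=1}^{k-1}(n_i+n_{i+1})=2\sum_{i=1}^{k-1}n_i+n_k\le 2\sum_{i=1}^{k}n_i$. Since the subtrees anchored at the distinct path vertices $v_1,\ldots,v_k$ are pairwise disjoint and all contained in the subtree rooted at $v_1$, which is itself one of the subtrees anchored at $v$, we have $\sum_{i=1}^{k}n_i\le n$, so the initial total is at most $2n$. For the final total I would use the characterization above: $l(P)=\sum_i L_i\le\sum_i\sum_h a_h/2^{|i-h|}=\sum_h a_h\sum_i 2^{-|i-h|}\le 3\sum_h a_h$, since $\sum_{d=-\infty}^{\infty}2^{-|d|}=3$. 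As $\sum_h a_h$ is the initial total, this yields $l(P)\le 3\cdot 2n=6n$.

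The main obstacle is justifying the characterization $L_i=\max_h a_h/2^{|i-h|}$ for the one-pass algorithm: I must confirm that visiting each edge exactly once, in decreasing order, neither overshoots this value nor leaves some factor-two constraint violated by a later increase, and that a single pass propagates each long edge's influence along its entire decaying tail. I expect to handle this with an invariant over the processing order, namely that when an edge is finalized it is already within a factor of two of both neighbors, and its value is never raised afterward except to meet a neighbor that is itself in range, together with a careful choice of tie-breaking so that shorter edges adjacent to a finalized long edge are always processed after it. With that invariant in hand the remaining steps are routine.
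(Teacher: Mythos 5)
The paper does not actually prove this lemma; it defers entirely to Lemma~6 of Angelini \emph{et al.}, so your self-contained argument is a genuinely different (and more informative) route. Your overall structure is sound: monotonicity of the updates gives the two lower bounds immediately, and the closed form $L_i=\max_h a_h/2^{|i-h|}$ together with $\sum_{d}2^{-|d|}=3$ cleanly yields both the 2-drawing property and $l(P)\le 3\cdot 2n=6n$. Two remarks. First, the obstacle you flag is real: if ``decreasing order of length'' means a single pass sorted by \emph{initial} lengths, the algorithm can genuinely fail (e.g.\ initial lengths $16,1,1,1,1$ with the unit edges processed right-to-left leaves adjacent final lengths $4$ and $1$), so the correct reading is greedy selection on \emph{current} lengths. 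Second, under that reading you do not need the exact characterization, the tie-breaking discussion, or the full propagation lower bound. Two simpler invariants suffice: (i) the lengths $\ell_1\ge\ell_2\ge\cdots$ at which edges are visited are non-increasing, since visiting an edge of current length $\ell$ raises neighbors only to $\ell/2\le\ell$; and (ii) consequently a visited edge is never changed again, because any later-visited neighbor has length at most $\ell$ and so contributes at most $\ell/2$ to the max. Then for adjacent edges with the earlier-visited one of final length $\ell$, the later one finishes between $\ell/2$ and $\ell$, giving the 2-drawing property, and the inequality $L_i\le\max_h a_h/2^{|i-h|}$ (which holds by an easy induction on update steps, with no ordering assumptions) gives the length bound. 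Your exact-equality claim is true but buys nothing extra, and proving its lower-bound half is the only genuinely delicate part of your plan; I would drop it.
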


\begin{proof}
See Lemma 6 in Angelini {\it et al.}~\cite{angelini2016low}.
\end{proof}

\begin{figure}
\centering
\includegraphics[scale=0.6]{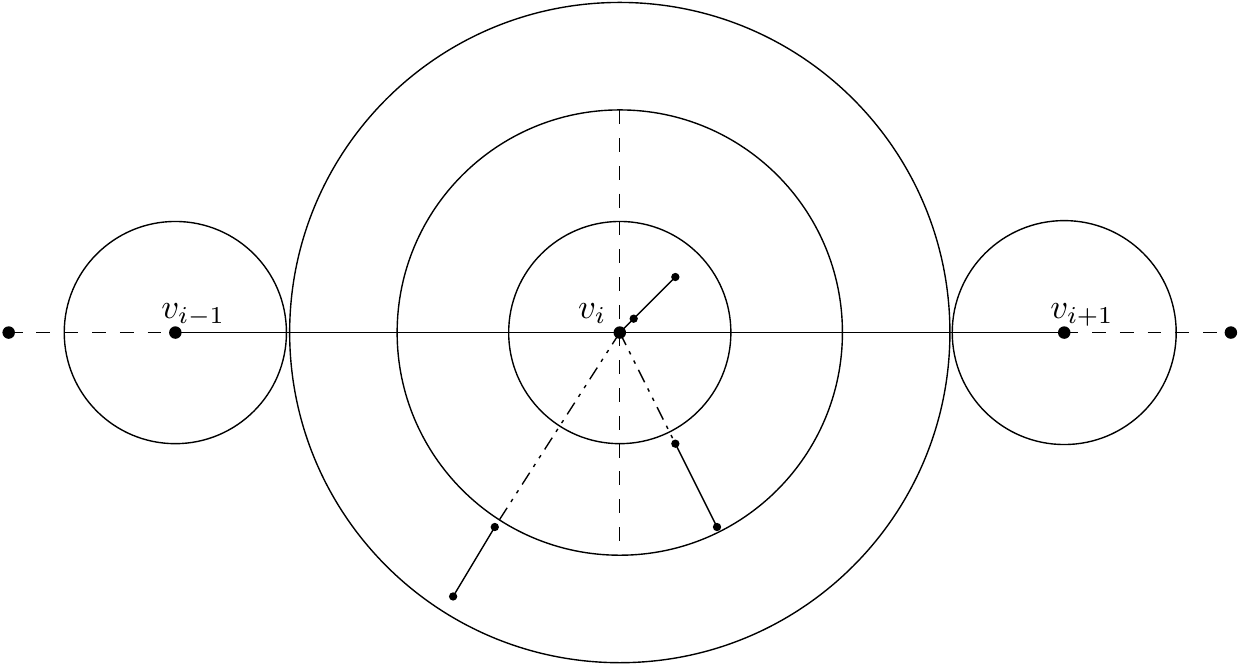}
\caption{Three vertices along a path in our decomposition, along with their drawing disks (not the ply disks). For the center vertex $v_i$, we show three paths in different layers around it, which would be drawn recursively.
}
\label{fig:layered-paths}
\end{figure}

We now perform a bottom-up construction of our tree, drawing each path using the \textsc{DrawPath} algorithm. Once all of the paths anchored at vertices in $P$ have been drawn, we construct a drawing of $P$ with each path in a separate layer (Figure~\ref{fig:layered-paths}). This translation may increase the ply radius of the first vertex in each of these paths, so the ply number of the drawing for each path may increase from 2 to 3.

In the appendix, we prove the following properties of this drawing.

\begin{restatable}{lemma}{drawinglemma} \label{lem:drawinglemma}
For each vertex $v$ we can associate a drawing disk $D_v$ (which is distinct from the ply disk for $v$) that satisfies the following properties.
\begin{enumerate}
\item If $v, w$ are two distinct vertices on the same path, then their disks $D_v, D_w$ are disjoint.
\item The ply disks for the subtrees anchored at $v$ are all contained within $D_v$, and are within disjoint layers.
\item Each path is scaled by a factor of $O(3^{\Delta})$ larger than the paths that are anchored at its vertices.
\end{enumerate}
\end{restatable}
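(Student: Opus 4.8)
The plan is to construct the drawing disks $D_v$ explicitly from the bottom-up layering construction already described, and then verify the three properties by tracking how edge lengths, layer radii, and scaling factors propagate through one recursive step. Since the construction is recursive, the natural strategy is induction on the height of the heavy-path decomposition tree. The base case (a single path whose vertices anchor no nontrivial subtrees) is immediate: each $D_v$ can be taken as a small disk around $v$, and properties 1--3 hold trivially. For the inductive step I would assume that every path anchored at a vertex of $P$ has already been drawn with its own family of drawing disks satisfying the three properties, and then show how to assemble these into a drawing of $P$ together with a drawing disk $D_{v_i}$ for each vertex $v_i \in P$.

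\textbf{Defining the disks and proving Property 1.} First I would invoke Lemma~\ref{lem:two-drawing-length}: the \textsc{DrawPath} algorithm places the vertices of $P$ on a segment so that $l(v_i,v_{i+1}) \geq n_i + n_{i+1}$, where $n_i$ bounds the total size of the subtrees anchored at $v_i$. The key point is that these edge-length guarantees leave enough room around each $v_i$ to house all of its anchored subtrees. I would define $D_{v_i}$ to be the disk centered at $v_i$ whose radius is a fixed fraction of $\min\{l(v_{i-1},v_i), l(v_i,v_{i+1})\}$ (using only the outgoing edge at the anchor end of $P$). Because the drawing is a 2-drawing, consecutive edge lengths differ by at most a factor of two, and the lower bounds $l(v_i,v_{i+1}) \geq n_i + n_{i+1}$ ensure that the disks $D_{v_i}$ and $D_{v_{i+1}}$ for consecutive path vertices are separated: the sum of their radii is less than the edge length between them. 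This gives Property 1. The same comparison handles non-consecutive vertices a fortiori, since they are farther apart on the segment.

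\textbf{Property 2 and the layering argument.} For Property 2 I would place the (already-drawn, by the inductive hypothesis) subtrees anchored at $v_i$ into concentric layers around $v_i$, exactly as in the balanced-tree subsection: with at most six subtrees per layer and each layer at least three times the radius of the previous one, Lemma~\ref{lemma:layer-size} guarantees the ply disks of vertices in different layers do not overlap, and the six-per-layer packing keeps same-layer disks disjoint. The total size of the subtrees anchored at $v_i$ is $n_i$, and the edge-length lower bound $l(v_i,v_{i+1}) \geq n_i + n_{i+1}$ is precisely what ensures that the outermost layer still fits inside $D_{v_i}$; this is the step where the choice of edge lengths in \textsc{DrawPath} is used crucially. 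The number of layers is $\lceil \Delta/6 \rceil$, so the outermost layer sits at radius $3^{\lceil \Delta/6 \rceil}$ times the innermost, which yields Property 3 with scaling factor $O(3^{\Delta})$.

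\textbf{The main obstacle.} I expect the hard part to be the interaction at the \emph{interface} between a path $P$ and the subtrees anchored at its vertices --- specifically, confirming that translating each anchored subtree outward into its layer does not create overlaps with the drawing disks of \emph{other} vertices of $P$, and does not increase the ply beyond the promised bound. The excerpt already flags that this translation can raise the ply radius of the first vertex of each anchored path (bumping its local ply from 2 to 3); I would need to verify that this inflation is absorbed inside $D_{v_i}$ and stays disjoint from the neighboring $D_{v_{i\pm1}}$. This requires a careful geometric estimate comparing the radius of a translated ply disk against the slack left by the inequality $l(v_i,v_{i+1}) \geq n_i + n_{i+1}$, and it is where the constant factors in the radius of $D_{v_i}$ must be chosen with care rather than taken for granted. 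Once that containment is established, the three properties follow by assembling the per-vertex estimates along $P$ and invoking the inductive hypothesis on each anchored subtree.
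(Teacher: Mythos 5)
Your proposal follows essentially the same route as the paper: drawing disks sized via the edge-length guarantees of Lemma~\ref{lem:two-drawing-length} give Property 1, concentric factor-of-three layers around each $v_i$ (via Lemma~\ref{lemma:layer-size}) give Property 2, and the resulting ratio between innermost and outermost layers gives the $O(3^{\Delta})$ of Property 3. The only substantive deviation is that you pack six anchored subtrees per layer as in the balanced-tree construction, whereas the paper devotes one anchored \emph{path} (a segment of length up to $6n_i$) to each of up to $\Delta-1$ layers via the recurrence $x_j = 3x_{j-1} + 6n_i$; either way the scaling is $O(3^{\Delta})$, though the six-per-layer packing is harder to justify for path-shaped regions than for disks.
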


Together, these properties imply that the ply disks for a path can only overlap with ply disks for their ancestor paths in the heavy path decomposition tree. Therefore, since each path is drawn with ply number at most 3, the total ply number is at most $3 (h + 1)$, where $h$ is the height of the heavy path decomposition tree. Since $h = O(\log n)$, the ply number is $O(\log n)$.

Lastly, if $\Delta$ is a constant, then the total scaling for our largest disk is $3^{O(\Delta \log n)}$, which simplifies to $n^{O(\Delta)}$. This completes our proof of Theorem~\ref{thm:log-drawing}.

\section{Lower bound for 2-trees}
Since all trees can be drawn with $O(\log n)$ ply number, it is natural to consider larger planar graph classes. We show that a 2-tree can require at least $\Omega(\sqrt{n / \log n})$ ply, for any fixed $\alpha$.

First, let us informally describe a 2-tree. A tree can be constructed by beginning with a root vertex, and adding vertices one a time, attaching each new vertex to a single parent. A 2-tree can be constructed by beginning with two root vertices connected by an edge. Then each time we add a new vertex, we attach it to two different parents.

We know that a star can be drawn with ply number 2 when the distance to successive vertices increases exponentially ~\cite{angelini2016low}. A tree can be drawn with $O(\log n)$ ply number when the distances from parents to their children decrease exponentially as we move down the tree. Intuitively, combining these two graphs produces a graph that requires large ply, since it is impossible to satisfy both conditions simultaneously.

Accordingly, we begin with $m$ disjoint complete binary trees of height $h$, which we label $T_i$, $1 \leq i \leq m$, where $m$ and $h$ will be determined later. Then we add one vertex $v$ connected to every vertex in each tree. Note this graph can be constructed as a subgraph of a 2-tree.

Now we have two possible types of drawings for our graph. In one case, every tree has some vertex whose ply disk contains $v$. Therefore, the ply number of our graph is at least $m$, since there are at least $m$ ply disks that all contain $v$. In the second case, there is some tree $T_i$ for which none of the ply disks for its vertices contain $v$. To analyze this case, we make use of the following lemma.

\begin{lemma}
If two adjacent vertices $w_1$ and $w_2$ satisfy $d(v, w_1) > (1 + 1 / \alpha) d(v, w_2)$, then the ply disk for $w_2$ contains $v$.
\end{lemma}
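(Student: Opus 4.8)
The plan is to argue directly from the definition of the ply disk for $w_2$, using a single application of the triangle inequality. Recall that the ply disk for $w_2$ is the \emph{open} disk centered at $w_2$ whose radius $r_{w_2}$ equals $\alpha$ times the length of the longest edge incident to $w_2$. Since $w_1$ and $w_2$ are adjacent, the edge $(w_1, w_2)$ is incident to $w_2$, so this longest edge has length at least $d(w_1, w_2)$, giving $r_{w_2} \geq \alpha\, d(w_1, w_2)$. It therefore suffices to show $d(v, w_2) < \alpha\, d(w_1, w_2)$, as this would place $v$ strictly inside the ply disk.

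First I would lower-bound $d(w_1, w_2)$ in terms of the two quantities appearing in the hypothesis. By the triangle inequality, $d(v, w_1) \leq d(v, w_2) + d(w_2, w_1)$, so $d(w_1, w_2) \geq d(v, w_1) - d(v, w_2)$. Substituting the hypothesis $d(v, w_1) > (1 + 1/\alpha)\, d(v, w_2)$ into this bound yields
\[ d(w_1, w_2) > \left(1 + \frac{1}{\alpha}\right) d(v, w_2) - d(v, w_2) = \frac{1}{\alpha}\, d(v, w_2). \]
Multiplying through by $\alpha$ gives $\alpha\, d(w_1, w_2) > d(v, w_2)$, which is exactly the inequality identified above. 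Combined with $r_{w_2} \geq \alpha\, d(w_1, w_2)$, we conclude $d(v, w_2) < r_{w_2}$, so $v$ lies in the open ply disk for $w_2$.

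This argument is short enough that there is no genuine obstacle, but I would take care with two small points. The first is the open-versus-closed distinction: the ply disk is open, so I need the strict inequality $d(v, w_2) < r_{w_2}$, which is why the strict hypothesis on $d(v, w_1)$ matters and must propagate through the chain. The second is that the longest edge at $w_2$ may be strictly longer than $(w_1, w_2)$; this only enlarges $r_{w_2}$ and hence does no harm, so replacing the true radius by the lower bound $\alpha\, d(w_1, w_2)$ is safe. I would also emphasize that the statement is purely metric, using no planarity or embedding assumption, which is precisely what makes it applicable to the second case of the lower-bound construction.
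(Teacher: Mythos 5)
Your proof is correct and follows essentially the same route as the paper's: apply the triangle inequality to get $d(w_1,w_2) > d(v,w_2)/\alpha$, then note the ply radius of $w_2$ is at least $\alpha\, d(w_1,w_2) > d(v,w_2)$. If anything, your write-up is more careful than the paper's, which contains a slip referring to the ply disk of $w_1$ where it means $w_2$, and you correctly handle the open-disk strictness.
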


\begin{proof}
Suppose that $d(v, w_2) = r$, so that $d(v, w_1) > (1 + 1 / \alpha)r$. By the triangle inequality, $d(w_1, w_2) > r / \alpha$. Then the ply radius for $w_1$ is at least $\alpha d(w_1, w_2) > r$. Therefore, the ply disk for $w_1$ contains $v$. (See Figure~\ref{fig:triangle}.)
\end{proof}

\begin{figure}
\includegraphics[scale=0.7]{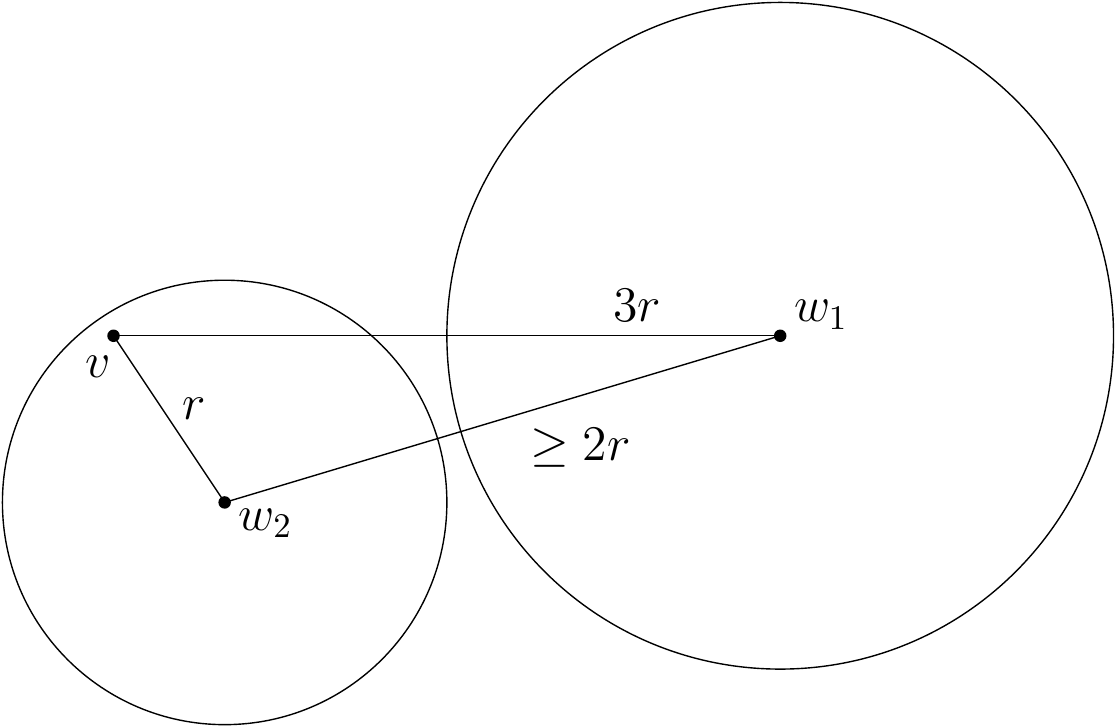}
\caption{If two adjacent vertices $w_1$ and $w_2$ are both attached to $v$, and $w_1$ is much further from $v$, then the ply disk for $w_2$ will contain $v$.}
\label{fig:triangle}
\end{figure}

Assume without loss of generality that the distance from $v$ to the root of $T_i$ is 1, and let $c = 1 + 1 / \alpha$. We can then show by induction that if no ply disk in $T_i$ contains $v$, then the nodes at the $j$th level of our tree are at distance at most $c^j$ from $v$, and at least $c^{-j}$.

Now partition our drawing into annulae $S_l$, where the inner radius of $S_l$ is $c^l$, and the outer radius is $c^{l + 1}$, for $-h \leq l \leq h - 1$. Next choose $\bar{l}$ to be the index of the annulus containing the maximum number of vertices. We have more than $2^h$ vertices, and only $2h$ annulae to distribute our vertices, so $S_{\bar{l}}$ must contain at least $2^h / 2h$ vertices. Since each of these vertices is at a distance of at least $c^{\bar{l}}$, each has a ply radius of at least $c^{\bar{l}} \alpha$.

A vertex at a distance of $c^{\bar{l} + 1}$ from $v$ will have a ply radius of at least $\alpha c^{\bar{l} + 1}$, so the outer edge of its ply disk will be at a distance of $(\alpha + 1) c^{\bar{l} + 1}$. Therefore, let $D$ be the disk centered at $v$ with a radius of $r_D = (\alpha + 1) c^{\bar{l} + 1}$, so that all of the ply disks for vertices in $S_{\bar{l}}$ are contained in $D$. Now we compute the ratio of the areas of the ply disks in $D$ to its own area, which is a lower bound for the ply number. Note that $D$ contains at least $2^h / 2h$ ply disks that each have a radius of at least $c^{\bar{l}} \alpha$. Therefore, this ratio is at least:
\[\underbrace{\frac{2^h}{2h}}_{\mathclap{\text{vertices}}} \overbrace{(\pi \alpha^2 c^{2\bar{l}})}^{\mathclap{\text{ply area per vertex}}} \underbrace{\frac{1}{\pi (\alpha + 1)^2 c^{2 \bar{l} + 2}}}_{\mathclap{\text{inverse disk area}}} = \frac{2^h}{h} \frac{\alpha^2}{(\alpha + 1)^2 c^2} = \Omega(2^h / h)\]

Now let $h = (\log n + \log \log n) / 2$, and let $m = \sqrt{n / \log n}$. Note that the total number of vertices in each tree is $2^{(\log n + \log \log n) / 2} = \sqrt{n \log n}$. The total number of vertices overall is then $m \cdot (2^{h + 1} - 1) + 1 = O(n)$.

If every tree $T_i$ has a vertex whose ply disk contains $v$, then the ply number is at least $m = \sqrt{n / \log n}$. Otherwise, if some tree does not have such a vertex, then that tree's ply number is $\Omega(2^h / h) = \Omega(\sqrt{n / \log n})$. This gives us the following theorem.

\begin{theorem}
There is a 2-tree with $O(n)$ vertices for which any drawing has ply number $\Omega(\sqrt{n / \log n})$, for any fixed $\alpha > 0$.
\end{theorem}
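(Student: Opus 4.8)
The plan is to exhibit a single family of graphs and argue that no straight-line drawing can avoid large ply, by forcing a dichotomy between two unavoidable sources of overlap. First I would fix the construction already described: take $m$ disjoint complete binary trees $T_1,\dots,T_m$, each of height $h$, and add one universal vertex $v$ joined to every tree vertex; this is a subgraph of a $2$-tree. I would keep $m$ and $h$ symbolic until the very end, so that I can balance the two cases. The core observation is that, for any drawing, either the ply disks cluster around $v$ or they do not, and I would split on exactly this: (Case 1) every $T_i$ contains a vertex whose ply disk contains $v$, or (Case 2) some $T_i$ contains no such vertex. In Case 1 there are at least $m$ distinct ply disks all covering the single point $v$, so the ply number is immediately at least $m$.

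The substance is Case 2, where I would apply the adjacency lemma (if $d(v,w_1) > (1+1/\alpha)\,d(v,w_2)$ on an edge, then the farther endpoint's ply disk covers $v$) in its contrapositive form. Assuming no ply disk in $T_i$ covers $v$, consecutive tree vertices can differ in their distance from $v$ by at most the factor $c = 1 + 1/\alpha$. Normalizing the root of $T_i$ to distance $1$, a straightforward induction on depth then confines the level-$j$ vertices to the annular range $[c^{-j}, c^{j}]$. Consequently all of the more than $2^h$ vertices of $T_i$ lie within the $2h$ annuli $S_l = \{\, x : c^l \le \|x-v\| < c^{l+1}\,\}$ for $-h \le l \le h-1$, so by pigeonhole one annulus $S_{\bar l}$ holds at least $2^h/(2h)$ of them.

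Finally I would convert this concentration into ply by an area-packing argument. Each vertex in $S_{\bar l}$ lies at distance at least $c^{\bar l}$ from $v$, so its edge to $v$ has length at least $c^{\bar l}$ and its ply disk has radius at least $\alpha c^{\bar l}$; moreover all these disks sit inside the disk $D$ of radius $(\alpha+1)c^{\bar l + 1}$ about $v$. Since the total area of the ply disks, counted with multiplicity, cannot exceed the ply number times the area of $D$, dividing yields a ply lower bound of
\[
\frac{2^h}{2h}\cdot\frac{\pi\alpha^2 c^{2\bar l}}{\pi(\alpha+1)^2 c^{2\bar l + 2}} = \Omega\!\left(\frac{2^h}{h}\right),
\]
where the $c^{\bar l}$ factors cancel and the $\alpha$-dependent constant is absorbed into the $\Omega$. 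Choosing $h = (\log n + \log\log n)/2$ and $m = \sqrt{n/\log n}$ makes the total vertex count $O(n)$ and equalizes the two bounds at $\Omega(\sqrt{n/\log n})$, closing both cases. I expect the main obstacle to be the dichotomy itself: it is the recognition that the star-like requirement (disks growing toward $v$) and the tree-like requirement (disks shrinking down each $T_i$) are mutually exclusive that drives the lower bound, and getting the annulus induction together with the cancellation of the $c^{\bar l}$ scale right---so that the same $\Omega(\sqrt{n/\log n})$ bound emerges for every fixed $\alpha$ and for every drawing---is the delicate part.
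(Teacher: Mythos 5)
Your proposal is correct and follows essentially the same route as the paper's proof: the same $m$ trees plus universal vertex construction, the same dichotomy, the same contrapositive use of the adjacency lemma to confine level-$j$ vertices to $[c^{-j}, c^j]$, the same pigeonhole over $2h$ annuli, the same area-packing bound of $\Omega(2^h/h)$, and the same parameter choices. Your explicit justification that total ply-disk area is at most the ply number times the area of $D$ is a slightly cleaner statement of the packing step than the paper's, but the argument is the same.
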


\section{Conclusion}
We have shown that all trees have 1-ply drawings when $\alpha = O(1 / \Delta)$, or logarithmic ply number when $\alpha = 0.5$, and that 2-trees may require $\Omega(\sqrt{n / \log n})$ ply for any $\alpha$.

There are many open questions left to resolve, but we are especially interested in closing the gap between constant and logarithmic ply for trees with between three and nine children per node. We would also like to consider intermediate planar graph classes between trees and 2-trees, such as outerplanar graphs, and determine whether they can be drawn with $O(\log n)$ ply.

\section*{Acknowledgements}
This research was supported by 
DARPA agreement no.~AFRL FA8750-15-2-0092
and NSF grants 1526631 and 1815073.
The views expressed are those of the authors and do not reflect the 
official policy or position of the Department of Defense 
or the U.S.~Government.

\clearpage
\bibliographystyle{plain}
\bibliography{refs}

\clearpage
\section*{Appendix}
Here we include a proof of Lemma~\ref{lem:drawinglemma}.

\drawinglemma*
\begin{proof}
We prove each part of our lemma as follows.
\begin{enumerate}
\item Suppose that our heavy path decomposition tree has a total height of $H$, and the path $P$ is at height $h$. Then we use the \textsc{DrawPath} algorithm to construct a drawing of $P$. We set the drawing disk for a vertex $v_i$ in $P$ to have radius $n_i$, that is, the size of the subtrees anchored at $v_i$. Since the length of the edge $(v_i, v_{i + 1})$ is at least $n_i + n_{i + 1}$ (by Lemma~\ref{lem:two-drawing-length}), the drawing disks for any two adjacent vertices in our path will not overlap.

\item Next we scale the drawing of $P$ by $3^{\Delta (H - h)}$. Note that each path anchored at a vertex in $P$ is scaled by $3^{\Delta (H - (h + 1))}$, so the difference in the scaling factor is $3^{\Delta}$. We show that at least $\Delta - 1$ paths can be anchored in different layers around each vertex $v$ in $P$.

From Lemma~\ref{lem:two-drawing-length}, we know that each path anchored at $v$ has an unscaled length of at most $6n$, where $n$ is the total size of the subtrees anchored at $v$. We also know by Lemma~\ref{lemma:layer-size} that the ply disks for vertices in two different paths will not overlap if their distance from $v$ differs by at least a factor of three.

So we will draw the $j$th path anchored at $v_i$ is drawn between $x_j$ and $x_{j + 1}$, where $x_j$ satisfies the following recurrence:
\begin{align*}
x_1 &= 6n_i \\
x_i &= 3x_{i - 1} + 6n_i
\end{align*}

Solving the recurrence, we find that $x_j = 3n (3^j - 1)$. Since we have at most $\Delta - 1$ layers, the largest layer will have an outer radius less than $3^{\Delta} n_i$. Since the unscaled drawing disk for $v_i$ had a radius of $n_i$, a relative scaling factor of $3^{\Delta}$ is sufficient to fit the paths that are anchored at it.

\item Since our heavy path decomposition has height $O(\log n)$, the largest path will be scaled by a factor of $3^{O(\Delta \log n)}$ from its original length of $O(n)$. So the diameter of our drawing is $3^{O(\Delta \log n)} n$, which simplifies to $n^{O(\Delta)}$. The total area is then also $n^{O(\Delta)}$, for a larger constant.
\end{enumerate}
\end{proof}

\end{document}